\def\hy{\hbox{-}\nobreak\hskip0pt}
\newcommand{\citex}[1]{\citeauthor{#1}~(\citeyear{#1})}
\newcommand{\citey}[1]{\citeauthor{#1},~\citeyear{#1}}
\newtheorem{theorem}{Theorem}
\newtheorem{lemma}[theorem]{Lemma}
\newcommand{\W}[1][xxxx]{\text{\normalfont W[#1]}}
\newcommand{\FPT}{\text{\normalfont FPT}}
\newcommand{\iset}{{\cal I}}
\newcommand{\Card}[1]{|#1|}
\newcommand{\SB}{\{\,}%
\newcommand{\SM}{\;{:}\;}%
\newcommand{\SE}{\,\}}%
\newcommand{\true}{\textup{true}}
\newcommand{\false}{\textup{false}}
\begin{document}

\title{On Finding Optimal Polytrees}

\author{
Serge Gaspers \\
The University of New South Wales and\\
Vienna University of Technology \\
{gaspers@kr.tuwien.ac.at} \\
\and
Mikko Koivisto \\
University of Helsinki \\
{mikko.koivisto@cs.helsinki.fi} \\
\and
Mathieu Liedloff \\
Universit\'{e} d'Orl\'{e}ans \\
{mathieu.liedloff@univ-orleans.fr} \\
\and
Sebastian Ordyniak \\
Vienna University of Technology \\
{ordyniak@kr.tuwien.ac.at} \\
\and
Stefan Szeider \\
Vienna University of Technology \\
{stefan@szeider.net} \\
}

\date{}
%


\maketitle

\begin{abstract}
  Inferring probabilistic networks from data is a notoriously difficult
  task. Under various goodness-of-fit measures, finding an optimal
  network is NP-hard, even if restricted to polytrees of
  bounded in-degree. Polynomial-time algorithms are known only for rare
  special cases, perhaps most notably for branchings, that is, polytrees
  in which the in-degree of every node is at most one. Here, we study
  the complexity of finding an optimal polytree that can be turned into
  a branching by deleting some number of arcs or nodes, treated as a parameter. 
We show that the problem can be solved via a matroid intersection
formulation in polynomial time if the number of deleted arcs is bounded
by a constant. The order of the polynomial time bound depends on this
constant, hence the algorithm does not establish fixed-parameter
tractability when parameterized by the number of deleted arcs. We show
that a restricted version of the problem allows fixed-parameter
tractability and hence scales well with the parameter. We contrast this
positive result by showing that if we parameterize by the number of
deleted nodes, a somewhat more powerful parameter, the problem is not
fixed-parameter tractable, subject to a complexity-theoretic assumption.
\end{abstract}

\section{Introduction}

There has been extensive research on learning probabilistic networks
from data by maximizing some suitable scoring function. 
\citex{Edmonds67} gave an efficient algorithm for the class of {\em
  branchings}, that is, directed forests with in-degree at most one; the
algorithm was discovered independently by \citex{ChuL65}, and
it has been later simplified and expedited by others
\cite{Bock71,CameriniFM79,Fulkerson74,GabowGST86,GabowGS89,Karp71,Tarjan77}.
\citex{Chickering95} showed that for general directed acyclic graphs,
DAGs, the problem is NP-hard even if the in-degree is at most
two. Motivated by this gap, \citex{Dasgupta99} asked for a
network class that is more general than branchings yet admitting
provably good structure-learning algorithms; his findings concerning
{\em polytrees}, that is, DAGs without undirected cycles, were however
rather negative, showing that the optimization problem is NP-hard even
if the in-degree is at most two.

Given the recent advances in exact exponential algorithms in general
(see, e.g., the book by \citex{FominK10}), and in finding
optimal DAGs in particular, it is natural to ask, whether ``fast''
exponential-time algorithms exist for finding optimal polytrees. For
general DAGs the fastest known algorithms run in time within a
polynomial factor of $2^n$, where $n$ is the number of nodes
\cite{KoivistoSood04,OttM03,ParviainenK09,SilanderMyllymaki06}.
However, it is not clear, whether even these bounds can be achieved for
polytrees; a brute-force algorithm would visit each polytree one by one,
whose number scales as the number of directed labelled trees
$n^{n-2}2^{n-1}$ \cite{Caley1889}. Do significantly faster algorithms
exist? Does the problem become easier if only a small number of nodes
are allowed an in-degree larger than one?

In this work, we take a first step towards answering these questions by
considering polytrees that differ from branchings by only a few
arcs. More precisely, we study the problem of finding an optimal
{\em $k$-branching}, defined as a polytree that can be turned into a branching by deleting $k$ arcs. We make the standard assumption that the scoring function decomposes into a sum of local scores; see the next section for precise definitions. We note that $k$-branchings generalize branchings in a different direction than the Tree-augmented Naive Bayes classifier (TAN) due to \citex{Friedman1997}. Namely, in a TAN the in-degree of each node is at most two, and there is a designated class node of in-degree zero, removing of which leaves a spanning tree; the tree is undirected in the sense that the symmetric conditional mutual information is employed to score arcs.

\paragraph{Polynomial-time result for $k$-branchings}

Our main result is an algorithm that finds an optimal $k$\hy branching
in polynomial time for every constant $k$. (See the next section for a formal
definition of the problem.) Our overall approach is straightforward: 
we search exhaustively over all possible
sets of at most $k$ ``extra arcs'', fix the guessed arcs, and solve the
induced optimization problem for branchings. Implementing this seemingly
innocent algorithm, however, requires successful treatment of certain
complications that arise when applying the existing matroid machinery for
finding optimal branchings. In particular, one needs to control the
interaction of the extra arcs with the solution from the induced
subproblem.

\paragraph{Fixed-parameter tractability}
Our algorithm for the $k$-branching is polynomial for fixed~$k$, but the
degree of the polynomial depends on $k$, hence the algorithm does not
scale well in $k$. We therefore investigate variants of the
$k$-branching problem that admit \emph{fixed-parameter tractability}
in the sense of \citex{DowneyFellows99}: the running time bound is given by a polynomial whose degree is independent of the parameter, the parameter contributing a constant factor to the bound.

In particular, we show that the $k$-branching problem is fixed-parameter
tractable if the set of arcs incident to nodes with more than one parent form a connected polytree with
exactly one sink, and each node has a bounded number of potential parent
sets. This result is interesting as we show that the $k$-branching problem remains
NP-hard under these restrictions.

We complement the fixed-parameter tractability result by showing that
more general variants of the $k$-branching problem are not
fixed-parameter tractable, subject to complexity theoretic
assumptions. In particular, we show that the $k$-branching problem is
not fixed-parameter tractable when parameterized by the \emph{number of
  nodes} whose deletion produces a branching.

\section{The $k$-branching problem}

A probabilistic network is a multivariate probability distribution that obeys a
structural representation in terms of a directed graph and a corresponding collection of univariate conditional probability distributions. For our purposes, it is crucial to treat the directed graph explicitly, whereas the conditional probabilities will enter our formalism only
implicitly. Such a graph is formalized as a pair $(N, A)$, where $N$ is the {\em node set} and $A \subseteq N\times N$ is the {\em arc set}; we identify the graph with the arc set $A$ when there is no ambiguity about the node set. A node $u$ is said to be a {\em parent} of $v$ in the graph if the arc $(u, v)$ is in $A$; we denote by $A_v$ the set of parents of $v$. When our interest is in the undirected structure of the graph, we may denote by $\overline{A}$ the {\em skeleton} of $A$, that is, the set of {\em edges} $\SB\{u,v\} \SM (u,v) \in A\SE$. For instance, we call $A$ a {\em polytree} if $\overline{A}$ is acyclic, and a {\em branching} if additionally each node has at most one parent. 

When learning a probabilistic network from data it is customary to introduce a scoring function that assigns each graph $A$ a real-valued score $f(A)$ that measures how well $A$ fits the data. While there are plenty of alternative scoring functions, derived under different statistical paradigms and assumptions \cite{LamBacchus94,Chickering95b,HeckermanGeigerChickering95,Dasgupta99}, the most popular ones share one important property: they are {\em decomposable}, that is, 
\[
	f(A) = \sum_{v \in N} f_v(A_v)\,, 
\]
with some ``local'' scoring functions $f_v$. The generic computational problem is to maximize the scoring function over some appropriate class of graphs given the local scoring functions as input. Note that the score $f_v(A_v)$ need not be a sum of any individual arc weights, and that the parent set $A_v$ may be empty.
Figure~\ref{fig:score+branching} shows a table representing a local
scoring function $f$, together with an optimal polytree.

\begin{figure}
  \centering
  \begin{tikzpicture}
  
    \tikzstyle{every circle node}=[inner sep=10pt,draw]
    \tikzstyle{every edge}=[draw,line width=8pt]
    
    \begin{scope}[scale=1]
      \begin{scope}
        \draw (0,0) node[rectangle, inner sep=0pt] {
          $\begin{array}{cc|c}
            v & P & f_v(P) \\
            \hline
            3 & \{1\} & 1.0 \\
            4 & \emptyset & 0.1 \\
            4 & \{1\} & 0.2 \\
            5 & \{1\} & 0.5 \\
            5 & \{1,2\} & 1.0 \\
            6 & \{3\} & 0.8 \\
            6 & \{3,4\} & 1.0 \\
            7 & \{5\} & 0.9 \\
            7 & \{4,5\} & 1.0
          \end{array}$
        };
        
      \end{scope}
      
      \draw
      (2cm,0cm) node {\Large $\mapsto$}
      ;
      
      \tikzstyle{every circle node}=[inner sep=2pt,draw]
      \tikzstyle{every edge}=[draw,line width=1pt]
      
      \begin{scope}[xshift=4.7cm,yshift=1.5cm,scale=.4]
        \draw
        (-2cm,0) node[circle,label=above:$1$] (n1) {}
        (2cm,0) node[circle,label=above:$2$] (n2) {}
        
        (-4cm,-4cm) node[circle,label=left:$3$] (n3) {}
        (0cm,-4cm) node[circle,label=right:$4$] (n4) {}
        (4cm,-4cm) node[circle,label=right:$5$] (n5) {}
        
        (-2cm,-8cm) node[circle,label=below:$6$] (n6) {}
        (2cm,-8cm) node[circle,label=below:$7$] (n7) {}
        ;
        
        \draw
        (n1) edge[-latex] (n3)
        (n1) edge[-latex] (n5)
        
        (n2) edge[-latex] (n5)
        
        (n3) edge[-latex] (n6)
        (n4) edge[-latex] (n6)
        
        (n5) edge[-latex] (n7)
        ;
      \end{scope}
                                            

            \end{scope}
          \end{tikzpicture}
          \caption{An optimal polytree for a given scoring
            function.}
  \label{fig:score+branching}
\end{figure}

We study this problem by restricting ourselves to a graph class that is
a subclass of polytrees but a superclass of branchings. We call a
polytree $A$ a {\em $k$-branching} if there exists a set of at most $k$
arcs $D \subseteq A$ such that in $A\setminus D$ every node has at most
one parent. Note that any branching is a $0$-branching. The {\em
  $k$-branching problem} is to find a $k$-branching $A$ that maximizes
$f(A)$, given the values $f_v(A_v)$ for each node $v$ and some
collection of possible parent sets $A_v \subseteq N\setminus\{v\}$.


\section{An algorithm for the $k$-branching problem}

Throughout this section we consider a fixed instance of the
$k$-branching problem, that is, a node set $N$ and scoring functions
$f_v$ for each $v \in N$. Thus all arcs will refer to elements of
$N\times N$. We will use the following additional notation. If $A$ is an
arc set, then $H(A)$ denotes the {\em heads} of the arcs in $A$, that
is, the set $\SB v \SM (u, v) \in A\SE$. If $C$ is a set of edges, then
$N(C)$ denotes the induced node set $\SB u, v \SM \{u, v\} \in C\SE$.

We present an algorithm that finds an optimal $k$-branching by
implementing the following approach. First, we guess an arc set $D$ of
size at most $k$. Then we search for an optimal polytree $A$ that
contains $D$ such that in $A\setminus D$ every node has at most one
parent; in other words, $B = A\setminus D$ is an optimal branching with
respect to an induced scoring function. Clearly, the set $\overline{D}$
must be acyclic. The challenge is in devising an algorithm that finds an
optimal branching $B$ that is disjoint from $D$ while guaranteeing that
the arcs in $D$ will not create undirected cycles in the union $B\cup
D$. To this end, we will employ an appropriate weighted matroid
intersection formulation that extends the standard formulation for
branchings.

We will need some basic facts about matroids.  A {\em matroid} is a pair
$(E, \iset)$, where $E$ is a set of {\em elements}, called the {\em
  ground set}, and $\iset$ is a collection of subsets of $E$, called the
{\em independent sets}, such that

\begin{quote}
(M1) $\emptyset \in \iset$;

(M2) if $A\subseteq B$ and $B\in \iset$ then $A\in \iset$; and

(M3) if $A, B\in \iset$ and $|A| < |B|$ then there exists an $e \in B\setminus A$ such that
$A\cup\{e\} \in \iset$.
  
\end{quote}

The {\em rank} of a matroid is the cardinality of its maximal independent sets.
Any subset of $E$ that is not independent is called {\em dependent}.
Any minimal dependent set is called a {\em circuit}.

The power of matroid formulations is much due to the availability of efficient algorithms \cite{BrezovecCG86,Edmonds70,Edmonds79,Frank81,IriT76,Lawler76} for  
the {\em weighted matroid intersection problem}, defined as follows. Given two matroids
$M_1 = (E, \iset_1)$ and $M_2 = (E, \iset_2)$, and a weight function $w \colon E \rightarrow \mathbb{R}$, find an $I\subseteq E$ that is independent in both matroids and maximizes the total weight of $I$, that is, $w(I)= \sum_{e \in I} w(e)$. The complexity of the fastest algorithm we are aware of (for the general problem) is summarized as follows.

\begin{theorem}[\citey{BrezovecCG86}]\label{the:brezovec}
The weighted matroid intersection problem can be solved in $O(m r (r + c + \log m))$ time, where 
$m = |E|$, $r$ is the minimum of the ranks of $M_1$ and $M_2$, and $c$ is the time needed for finding the circuit of $I \cup \{e\}$ in both $M_1$ and $M_2$ where $e \in E$ and $I$ is independent in both $M_1$ and $M_2$. 
\end{theorem}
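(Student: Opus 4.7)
The plan is to prove the bound via the standard primal--dual framework for weighted matroid intersection, maintaining a common independent set $I$ together with two potential functions $y_1, y_2 \colon E \to \mathbb{R}$ whose values at each element upper-bound the weight in a controlled way and which satisfy complementary slackness with the characteristic vector of $I$. Starting from $I = \emptyset$ with easily computed initial potentials, I would run $r$ augmentation phases. The invariant after phase $i$ is that $I$ has cardinality $i$ and is a maximum-weight common independent set of that cardinality; after $r$ phases $I$ is of maximum possible size and of maximum weight among all common independent sets.

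Each phase operates on an \emph{exchange graph} $G_I$ on vertex set $E$. Arcs encode valid single-element swaps: an arc $(e,f)$ with $e \in I$, $f \notin I$ exists whenever $I - e + f \in \iset_1$, and an arc $(f,e)$ whenever $I - e + f \in \iset_2$. Two distinguished sets $S_1, S_2 \subseteq E \setminus I$ collect the elements that extend $I$ in $M_1$ and in $M_2$, respectively. A shortest $S_1$-to-$S_2$ path under a suitable reduced-length function derived from $y_1, y_2$ yields an augmenting set whose symmetric difference with $I$ is the new common independent set; the potentials are then updated using the shortest-path distances so as to restore complementary slackness. A lexicographic tie-break (shortest in the number of arcs, then lightest) ensures that no shortcut or re-augmentation destroys feasibility or optimality, which is the classical primal--dual correctness argument.

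To reach $O(mr(r+c+\log m))$, each phase must run in $O(m(r+c+\log m))$ time. The exchange graph is never materialized: the out-neighbors of any element $f$ are read off from the fundamental circuit of $I + f$ in $M_1$ or $M_2$, costing $O(c)$ per probe. A Dijkstra-style search with a binary heap contributes the $O(\log m)$ factor, and the $O(r)$ absorbs the per-element cost of walking the augmenting path, which has length at most $2r+1$, together with the potential update. The principal obstacle, and the real content of the Brezovec--Cunningham--Goddyn result, is designing the length function and the scanning order so that shortest augmenting paths simultaneously preserve the primal--dual invariant and admit the tight $O(c)$-per-probe circuit oracle, thereby avoiding the naive $\Theta(r(m-r))$ cost of listing exchange-graph arcs explicitly in every phase. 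Once that bookkeeping is in place, the overall bound follows by summing $O(m(r+c+\log m))$ over the $r$ phases.
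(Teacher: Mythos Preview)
The paper does not prove this theorem. It is stated as a black-box result from the cited reference (Brezovec, Cornu\'{e}jols, and Glover, 1986) and is used only as a tool; there is no proof in the paper to compare against. So the comparison you were asked to make is vacuous here.

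That said, two remarks on your sketch. First, you misattribute the result to ``Brezovec--Cunningham--Goddyn''; the authors are Brezovec, Cornu\'{e}jols, and Glover. Second, your accounting for the $O(r)$ term inside the per-phase bound $O(m(r+c+\log m))$ is slightly off. You say it ``absorbs the per-element cost of walking the augmenting path, which has length at most $2r+1$,'' but walking one path of length $O(r)$ is an $O(r)$ \emph{total} cost, not an $O(r)$ cost charged to each of the $m$ elements. The $r$ in the per-element bound actually comes from the size of a fundamental circuit: after the oracle returns the circuit of $I\cup\{e\}$ in time $c$, you still need to iterate over its up to $r+1$ elements to relax the corresponding exchange-graph arcs in the Dijkstra scan. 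That gives $O(c+r+\log m)$ per element settled, hence $O(m(r+c+\log m))$ per phase and the stated overall bound after $r$ phases. Apart from that, the primal--dual outline you give (dual potentials, reduced lengths to keep arc lengths nonnegative, Dijkstra with a heap, lexicographic shortest augmenting paths) is the right skeleton for how the cited result is obtained.
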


We now proceed to the specification of two matroids, $M_1(S) = (N\times N, \iset_1(S))$ and 
$M_2(N\times N, \iset_2(S))$, parametrized by an arbitrary arc set $S$ such that $\overline{S}$ is acyclic.
\begin{description}
\item{The {\em in-degree matroid} $M_1(S)$:}
Let $\iset_1(S)$ consist of all arc sets $B$ such that no arc in $B$ has a head in $H(S)$ and every node outside $H(S)$ is the head of at most one arc in $B$.
\item{The {\em acyclicity matroid} $M_2(S)$:}
Let $\iset_2(S)$ consist of all arc sets $B$ such that $\overline{B\cup S}$ is acyclic. 
\end{description}
We observe that the standard matroid intersection formulation of branchings is obtained as the special case of $S=\emptyset$: then an arc set is seen to be branching if and only if it is independent in both the in-degree matroid and the acyclicity matroid.  

The next two lemmas show that $M_1(S)$ and $M_2(S)$ are indeed matroids whenever $\overline{S}$ is acyclic. 

\begin{lemma}\label{lem:matroid-one} $M_1(S)$ is a matroid.
\end{lemma}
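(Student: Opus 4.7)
My plan is to recognize $M_1(S)$ as a partition matroid, which reduces the statement to a well-known fact and makes the verification of the matroid axioms routine. Concretely, I would partition the ground set $N\times N$ into blocks indexed by nodes: for each $v\in N$, let $E_v=\{(u,v) : u\in N\}$ be the set of arcs with head $v$. Assign capacity $c_v=1$ if $v\in N\setminus H(S)$ and $c_v=0$ if $v\in H(S)$. Then $\iset_1(S)$ is precisely the collection of arc sets $B\subseteq N\times N$ satisfying $|B\cap E_v|\le c_v$ for every $v\in N$, which is the standard definition of a partition matroid.

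If one prefers to check the axioms directly, (M1) and (M2) are immediate: the empty set satisfies both defining conditions, and deleting arcs cannot increase the number of arcs incident to any head, nor introduce an arc whose head lies in $H(S)$. The only substantive step is the exchange axiom (M3). The key observation is that for any $I\in\iset_1(S)$ the arcs of $I$ have pairwise distinct heads, all lying in $N\setminus H(S)$, so $|I|=|H(I)|$ and $H(I)\subseteq N\setminus H(S)$. Given $A,B\in\iset_1(S)$ with $|A|<|B|$, this yields $|H(A)|<|H(B)|$, so there exists $v\in H(B)\setminus H(A)$. Taking $e$ to be the unique arc of $B$ with head $v$, I have $e\notin A$ (because $v\notin H(A)$), the head of $e$ lies in $N\setminus H(S)$, and no other arc of $A$ has head $v$; hence $A\cup\{e\}\in\iset_1(S)$.

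There is no genuine obstacle here: the hypothesis that $\overline{S}$ be acyclic is not even used for this matroid (it will enter only in the analysis of $M_2(S)$). The lemma is essentially a restatement of the fact that the authors have chosen exactly the right encoding of an in-degree constraint in matroid language.
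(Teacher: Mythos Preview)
Your proof is correct and essentially matches the paper's: both verify (M1) and (M2) trivially and establish (M3) by observing that independent sets have pairwise distinct heads in $N\setminus H(S)$, so a larger independent set must cover some head $v$ not covered by the smaller one, and the corresponding arc can be added. Your additional framing of $M_1(S)$ as a partition matroid (with blocks $E_v$ and capacities $c_v\in\{0,1\}$) is a welcome clarification that the paper leaves implicit, and your remark that the acyclicity of $\overline{S}$ is irrelevant for this lemma is accurate.
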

\begin{proof}
Fix the arc set $S$ and denote $\iset_1(S)$ by $\iset_1$ for short. 
  Clearly, $\emptyset \in \iset_1$ and if $A \subseteq B$ and $B \in \iset_1$ then
  also $A \in \iset_1$. Consequently, $M_1(S)$ satisfies (M1) and (M2). To see that
  $M_1(S)$ satisfies (M3) let $A,B \in \iset_1$ with $|A| < |B|$. Because of the
  definition of $M_1(S)$ the sets $A$ and $B$ contain at most one arc with head $v$, for every
  $v \in N \setminus H(S)$. Because $|A| < |B|$ there is a node
  $v \in N \setminus H(S)$ such that $v$ is the head of an arc in
  $B$ but $v$ is not the head of an arc in $A$. Let $e \in B$ be the
  arc with head $v$. Then $e \in B \setminus A$ and $A \cup \{e\} \in
  \iset_1$. Hence, $M_1(S)$ satisfies (M3).
\end{proof}

\begin{lemma}\label{lem:matroid-two} $M_2(S)$ is a matroid. 
\end{lemma}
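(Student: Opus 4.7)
My plan is to realize $M_2(S)$ as the graphic matroid of a suitable multigraph, reducing the matroid-axiom verification to a classical fact. Let $G$ be the multigraph on vertex set $N$ whose edges are indexed by $N \times N$, where each arc $(u,v)$ contributes a single multiedge with endpoints $u$ and $v$; in particular, a pair of ``reverse'' arcs $(u,v)$ and $(v,u)$ yields two parallel edges of $G$. The graphic matroid $M(G)$, whose independent sets are exactly the forest subsets of the edges, is a matroid by a classical theorem. With ``$\overline{B\cup S}$ acyclic'' interpreted in the multigraph sense (so that two parallel arcs form a 2-cycle), $\iset_2(S)$ coincides with the family of arc sets $B$ such that $B\cup S$ is a forest in $G$; the hypothesis on $\overline{S}$ amounts to $S$ itself being independent in $M(G)$.

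Axioms (M1) and (M2) are then immediate: $\overline{\emptyset\cup S} = \overline{S}$ is acyclic by hypothesis, and any sub-arc-set of a forest is a forest. For the exchange axiom (M3), I would take $A, B \in \iset_2(S)$ with $|A|<|B|$ and split into two cases. If $|A\cup S| < |B\cup S|$, then since both $A\cup S$ and $B\cup S$ are independent in $M(G)$, the standard exchange axiom applied to $M(G)$ produces an element $e \in (B\cup S)\setminus (A\cup S)$ for which $(A\cup S)\cup\{e\}$ is again a forest of $G$. Since $e\notin S$ and $e \notin A$, this $e$ lies in $B\setminus A$, and the forest condition directly yields $A\cup\{e\}\in \iset_2(S)$.

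Otherwise $|A\cup S| \geq |B\cup S|$, which combined with $|A|<|B|$ and the identity $|X\cup S| = |X| + |S| - |X\cap S|$ forces $|A\cap S| < |B\cap S|$; hence one can pick $e \in (B\cap S)\setminus A \subseteq B\setminus A$. Because $e\in S$, we have $(A\cup\{e\})\cup S = A\cup S$, so $\overline{(A\cup\{e\})\cup S} = \overline{A\cup S}$ remains acyclic and $A\cup\{e\}\in \iset_2(S)$. The main subtlety is identifying the ambient multigraph; once that is in place, both cases of the exchange argument reduce to short applications of known matroid facts.
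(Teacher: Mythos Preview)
Your argument is correct and takes a more structural route than the paper's direct verification. The paper proves (M3) by counting, for each connected component of $\overline{A\cup S}$, the edges of $\overline{B\cup S}$ lying inside it, and then extracting an arc of $B$ that bridges two components; you instead recognize $M_2(S)$ as built from the graphic matroid $M(G)$ of the multigraph $G$---it is the contraction $M(G)/S$ with the elements of $S$ adjoined back as free elements---and you obtain exchange from the standard exchange axiom of $M(G)$ together with a short inclusion--exclusion on $|X\cup S|$ to handle the overlap with~$S$. This is cleaner and makes the ambient matroid explicit. One point deserves emphasis: your multigraph reading of ``$\overline{B\cup S}$ acyclic'' is not just a convenience but a genuine fix. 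Under the paper's literal skeleton-as-\emph{set} definition, antiparallel arcs $(u,v)$ and $(v,u)$ collapse to a single undirected edge, and then (M3) can fail; for example, with $S=\emptyset$, $N=\{a,b,c,d\}$, $A=\{(a,b),(b,a),(b,c),(c,b),(c,d)\}$ (skeleton the path $a\text{--}b\text{--}c\text{--}d$, $|A|=5$) and $B=\{(a,b),(b,a),(a,c),(c,a),(a,d),(d,a)\}$ (skeleton the star at $a$, $|B|=6$), every arc of $B\setminus A$ closes an undirected cycle when added to $A$. The paper's proof glosses over this by treating the map from skeleton edges to arcs as one-to-one; your reduction pins down the convention under which the lemma actually holds and then the proof goes through without further caveats.
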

\begin{proof}
  Fix the arc set $S$ and denote $\iset_2(S)$ by $\iset_2$ for short. 
  Because the skeleton 
  $\overline{S}$ is acyclic and acylicity is a hereditary property
  (a graph property is called hereditary if it is closed under taking induced subgraphs)
  it follows that $\emptyset \in \iset_2$ and if $A \subseteq B$ and $B \in \iset_2$ then
  also $A \in \iset_2$. Consequently, $M_2(S)$ satisfies (M1) and (M2). To see that
  $M_2(S)$ satisfies (M3) let $A,B \in \iset_2$ with $|A| < |B|$. 
Consider the sets  $A' = \overline{A \cup S}$ and $B' = \overline{B \cup S}$. Let $C$ be a connected subset of $A'$. Because both $A'$ and $B'$ are acyclic, 
  it follows that the number of edges of $B'$ with both endpoints in
  $N(C)$ is at most the number of edges of $A'$ with both endpoints
  in $N(C)$. Because every edge in $A' \setminus \overline{S}$ corresponds to
  an arc in $A$ and similarly every edge in $B' \setminus \overline{S}$
  corresponds to an arc in $B$ and $|A|<|B|$, it follows that there is
  an arc $e \in B \setminus A$ whose endpoints are contained in two
  distinct components of $A'$. Consequently, the set $A' \cup
  \overline{\{e\}}$ is acyclic and hence $A \cup \{e\} \in \iset_2$.
\end{proof}

We now relate the common independent sets of these two matroids to $k$-branchings. If $A$ is a $k$-branching, we call an arc set $D$ a {\em deletion set} of $A$ if $D$ is a subset of $A$, contains at most $k$ arcs, and in $A\setminus D$ every node has at most one parent.

\begin{lemma}\label{lem:matroid-three}
  Let $A$ be an arc set and  $D$ a subset of $A$
  of size at most $k$ such that no two arcs from
  $D'=\SB (u,v)\in A\setminus D \SM v\in H(D)\SE$
  have the same head and such that $\overline{S}$ is acyclic, where $S=D\cup D'$.
  We have that $A$ is a $k$-branching with deletion set $D$ if and only if
  $A\setminus S$ is independent in both $M_1(S)$ and $M_2(S)$.
\end{lemma}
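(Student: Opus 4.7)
The starting observation is that $S\subseteq A$: by hypothesis $D\subseteq A$, and every arc of $D'$ lies in $A\setminus D\subseteq A$. Consequently $(A\setminus S)\cup S = A$, which is what lets the two matroid conditions on $A\setminus S$ (with parameter $S$) translate directly into the two defining conditions of ``$k$\hy branching with deletion set $D$'' for $A$.

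For the forward direction, assume $A$ is a $k$-branching with deletion set $D$. Independence of $A\setminus S$ in $M_2(S)$ is immediate: $\overline{(A\setminus S)\cup S}=\overline{A}$ is acyclic because $A$ is a polytree. For $M_1(S)$ I would argue that no arc $(u,v)\in A\setminus S$ can have head in $H(S)$. If $v\in H(D)$ then $(u,v)\in A\setminus D$ together with $v\in H(D)$ forces $(u,v)\in D'\subseteq S$, contradicting $(u,v)\in A\setminus S$. If $v\in H(D')$, then some arc of $D'\subseteq A\setminus D$ already has head $v$, so together with $(u,v)\in A\setminus S\subseteq A\setminus D$ the node $v$ would have two parents in $A\setminus D$, contradicting the definition of a deletion set. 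The second clause of $\iset_1(S)$, that each node outside $H(S)$ is the head of at most one arc of $A\setminus S$, then follows from $A\setminus S\subseteq A\setminus D$ and the single-parent property of $A\setminus D$.

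For the backward direction, assume $A\setminus S\in\iset_1(S)\cap\iset_2(S)$. Independence in $M_2(S)$ combined with $S\subseteq A$ gives $\overline{A}=\overline{(A\setminus S)\cup S}$ acyclic, so $A$ is a polytree. To show that every node has at most one parent in $A\setminus D$, I would split on a node $v$. If $v\notin H(S)$, then any arc of $A\setminus D$ with head $v$ satisfies $v\notin H(D)$ and hence does not lie in $D'$, so it lies in $A\setminus S$; the $\iset_1(S)$-condition then caps the number of such arcs by one. If $v\in H(S)$, then $\iset_1(S)$-independence forbids arcs of $A\setminus S$ from having head $v$, so every arc of $A\setminus D$ with head $v$ lies in $D'$ (when $v\in H(D)$ this is immediate from the definition of $D'$, and when $v\in H(D')\setminus H(D)$ it is vacuous that the arc lies in $D\cup D'\setminus D = D'$), and the hypothesis that no two arcs of $D'$ share a head caps these at one as well. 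With $|D|\le k$ and $D\subseteq A$ already given, this establishes that $D$ is a deletion set for $A$.

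The only step that requires care is the bookkeeping of how arcs with a given head $v$ are distributed among $D$, $D'$, and $A\setminus S$ according to whether $v$ lies in $H(D)$, $H(D')\setminus H(D)$, or outside $H(S)$; once that case split is set up, the three cases each amount to reading off a defining property of the relevant matroid or the hypothesis on $D'$. No deeper matroid machinery is needed beyond Lemmas~\ref{lem:matroid-one} and~\ref{lem:matroid-two}.
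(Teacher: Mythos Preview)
Your proof is correct and follows essentially the same route as the paper's. One simplification you overlooked: by the definition of $D'$ every arc of $D'$ has its head in $H(D)$, so $H(D')\subseteq H(D)$ and hence $H(S)=H(D)$; this makes your case $v\in H(D')\setminus H(D)$ vacuous and lets the claim ``every arc of $A$ with head in $H(S)$ lies in $S$'' follow directly from the definition of $D'$ (as the paper notes) without invoking the deletion-set property.
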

\begin{proof}
  $(\Rightarrow):$ Suppose $A$ is a $k$-branching with deletion set $D$.
  Then $A\setminus D$ is a branching, which shows that every node $v$
  outside $H(S)$ has in-degree at most one in $A\setminus S$. Since by
  definition all arcs with a head in $H(S)$ are contained in $S$, no arc
  in $A\setminus S$ has a head in $H(S)$. Therefore, $A\setminus S$ is
  independent in $M_1(S)$.  Since every $k$-branching is a polytree,
  $\overline{(A\setminus S) \cup S} = \overline{A}$ is acyclic, and
  therefore $A\setminus S$ is independent in $M_2(S)$.

 $(\Leftarrow):$ Since $A\setminus S$ is independent in $M_2(S)$, we
 have that $\overline{(A\setminus S) \cup S} = \overline{A}$ is
 acyclic. Thus, $A$ is a polytree. As $A\setminus S$ is independent in
 $M_1(S)$, every node outside $H(S)$ has in-degree at most one in
 $A\setminus S$ and every node from $H(S)$ has in-degree zero in
 $A\setminus S$. Since the head of every arc from $D'$ is in $H(S)$ and
 no two arcs from $D'$ have a common head, $(A\setminus S) \cup D' =
 A\setminus D$ has maximum in-degree at most one. Because $|D|\le k$, we
 have that $A$ is a $k$-branching with deletion set $D$.
\end{proof}

The characterization of Lemma~\ref{lem:matroid-three} enables the following algorithm for the $k$-branching problem. Define the weight function by letting $w(u, v) = f_v(\{u\}) - f_v(\emptyset)$ for all arcs $(u, v)$. Guess the arc sets $D$ and $D'$, put $S = D\cup D'$, check that $\overline{S}$ is acyclic, find a maximum-weight set $B$ that is independent in both $M_1(S)$ and $M_2(S)$; output a $k$-branching $A = B\cup S$ that yields the maximum weight over all guesses $D$ and $D'$, where the weight of $B\cup S$ is obtained as
\[
	w(B) + \sum_{v \in H(S)} \big(f_v(S_v) - f_v(\emptyset)\big)\,.
\]
It is easy to verify that maximizing this weight is equivalent to maximizing the score $f(A)$. 
Figure~\ref{fig:algo} illustrates the algorithm for the scoring function
of Figure~\ref{fig:score+branching}.

\begin{figure}
  \centering
  \definecolor{Blue}{rgb}{0.0,0.6,1}
  \begin{tikzpicture}[scale=.35]
    \tikzstyle{every circle node}=[inner sep=2pt,draw]
    \tikzstyle{every edge}=[draw,line width=1pt]
    \draw
    (-2cm,0) node[circle,label=above:$1$] (n1) {}
    (2cm,0) node[circle,label=above:$2$] (n2) {}
    
    (-4cm,-4cm) node[circle,label=left:$3$] (n3) {}
    (0cm,-4cm) node[circle,label=right:$4$] (n4) {}
    (4cm,-4cm) node[circle,label=right:$5$] (n5) {}
    
    (-2cm,-8cm) node[circle,label=below:$6$] (n6) {}
    (2cm,-8cm) node[circle,label=below:$7$] (n7) {}
    ;
    
    \draw
    (n3) edge[-latex,densely dotted] (n6)
    (n4) edge[-latex,dashed] (n6)
    
    (n1) edge[-latex,densely dotted] (n5)
    (n2) edge[-latex,dashed] (n5)
    ;
  \end{tikzpicture}\hspace{4cm}\begin{tikzpicture}[scale=.35]

    \tikzstyle{every circle node}=[inner sep=2pt,draw]
    \tikzstyle{every edge}=[draw,line width=1pt]

    \draw
    (-2cm,0) node[circle,label=above:$1$] (n1) {}
    (2cm,0) node[circle,label=above:$2$] (n2) {}
    
    (-4cm,-4cm) node[circle,label=left:$3$] (n3) {}
    (0cm,-4cm) node[circle,label=right:$4$] (n4) {}
    (4cm,-4cm) node[circle,label=right:$5$] (n5) {}
    
    (-2cm,-8cm) node[circle,label=below:$6$] (n6) {}
    (2cm,-8cm) node[circle,label=below:$7$] (n7) {}
    ;
    
    \draw
    (n3) edge[-latex,densely dotted] (n6)
    (n4) edge[-latex,dashed] (n6)
    
    (n1) edge[-latex,densely dotted] (n5)
    (n2) edge[-latex,dashed] (n5)
    
    (n1) edge[-latex] (n3)
    (n5) edge[-latex] (n7)
    ;
  \end{tikzpicture}

\newcommand{\cD}[1]{{\color{red} #1}}
\newcommand{\cDP}[1]{{\color{violet} #1}}
\newcommand{\cB}[1]{{\color{Blue} #1}}

  \caption{Left: the two guessed arc sets $D$ (dotted) and $D'$ (dashed).
    Right: the arc set $A$ (solid) that is a heaviest common independent
    set of the two matroids $M_1(S)$ and $M_2(S)$.}
\label{fig:algo}
\end{figure}

It remains to analyze the complexity of the algorithm. Denote by $n$ the
number of nodes.  For a moment, consider the arc set $S$ fixed. To apply
Theorem~\ref{the:brezovec}, we bound the associated key quantities: the
size of the ground set is $O(n^2)$; the rank of both matroids is clearly
$O(n)$; circuit detection can be performed in $O(n)$ time, by a
depth-first search for $M_1(S)$ and by finding a node that has higher
in-degree than it is allowed to have in $M_2(S)$. Thus, by
Theorem~\ref{the:brezovec}, a maximum-weight set that is independent in
both matroids can be found in $O(n^4)$ time. Then consider the number of
possible choices for the set $S = D\cup D'$. There are $O(n^{2k})$
possibilities for choosing a set $D$ of at most $k$ arcs such that
$\overline{D}$ is acyclic. For a fixed $D$, there are $O(n^{k})$
possibilities for choosing a subset $D' \subseteq N\times H(D)$ such
that $\overline{D\cup D'}$ is acyclic and no two arcs from $D'$ have the
same head. Thus there are $O(n^{3k})$ relevant choices for the set $S$.

We have shown the following. 
\begin{theorem}\label{the:xp} 
The $k$-branching problem can be solved in $O(n^{3k+4})$ time.
\end{theorem}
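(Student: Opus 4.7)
The plan is to combine a per-guess weighted matroid intersection with an enumeration over all legal pairs $(D,D')$. Before counting, I would first verify that the weight function $w(u,v)=f_v(\{u\})-f_v(\emptyset)$ together with the additive correction $\sum_{v\in H(S)}(f_v(S_v)-f_v(\emptyset))$ correctly encodes the score. Writing $A=B\cup S$ with $S=D\cup D'$ and $B$ independent in both $M_1(S)$ and $M_2(S)$, each node $v\notin H(S)$ has in-degree at most one in $B$ and contributes $f_v(\emptyset)+w(u,v)$ if its parent in $B$ is $u$, or $f_v(\emptyset)$ if it has no parent in $B$, while each $v\in H(S)$ contributes $f_v(S_v)$. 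Summing gives $f(A)=w(B)+\sum_{v\in H(S)}(f_v(S_v)-f_v(\emptyset))+\sum_{v\in N}f_v(\emptyset)$, so maximizing the displayed weight for every fixed $S$, and then taking the best over all $S$, returns an optimal $k$-branching by Lemma~\ref{lem:matroid-three}.

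Next I would count the guesses. There are $O(n^{2k})$ ways to choose an arc set $D$ of size at most $k$, and we discard those whose skeleton $\overline{D}$ is cyclic. For each remaining $D$, we have $|H(D)|\le k$, and a legal $D'\subseteq N\times H(D)$ selects at most one arc per head in $H(D)$, giving $O(n^k)$ choices; acyclicity of $\overline{D\cup D'}$ can be tested in $O(n)$ time by a DFS, and this filtering does not affect the enumeration bound. This yields $O(n^{3k})$ relevant sets $S=D\cup D'$.

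For each fixed $S$, I would apply Theorem~\ref{the:brezovec} with $m=n^2$, $r=O(n)$, and $c=O(n)$: the fundamental circuit in $M_1(S)$ for a new arc $e$ is obtained by locating the unique other arc in $I$ sharing the head of $e$, and in $M_2(S)$ by tracing the unique undirected path in $\overline{S\cup I}$ between the endpoints of $e$ via a DFS. Substituting into $O(mr(r+c+\log m))$ gives $O(n^4)$ per intersection, and multiplying by the number of guesses yields the claimed $O(n^{3k+4})$.

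The main subtlety I anticipate is the weight-translation identity, since each $f_v$ is an arbitrary function of its parent set and does not decompose over individual arc weights in $S$; the argument above isolates precisely the nodes whose contribution is not captured by a single arc weight, namely those in $H(S)$, and compensates with the additive correction term. Everything else reduces to a mechanical substitution of the enumeration bound and the quantities $m$, $r$, $c$ into Theorem~\ref{the:brezovec}.
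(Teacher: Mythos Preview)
Your proposal is correct and follows essentially the same approach as the paper: enumerate all legal pairs $(D,D')$, for each resulting $S$ solve a weighted matroid intersection over $M_1(S)$ and $M_2(S)$ using Theorem~\ref{the:brezovec}, and combine the $O(n^{3k})$ enumeration bound with the $O(n^4)$ per-guess cost. Your explicit verification of the weight-translation identity and your description of the circuit computations are in fact more detailed than the paper's own account (and you correctly match the in-degree check to $M_1(S)$ and the DFS to $M_2(S)$, whereas the paper's text swaps these).
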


\section{Fixed-parameter tractability}

\label{sec:fpt}

Theorem~\ref{the:xp} shows that the $k$\hy branching problem can be
solved in ``non-uniform polynomial time'' as the order of the polynomial
time bound depends on~$k$. In this section we study the question of whether
one can get $k$ ``out of the exponent'' and obtain a uniform
polynomial-time algorithm.

The framework of \emph{Parameterized Complexity} \cite{DowneyFellows99}
offers the suitable tools and methods for such an investigation, as it
allows us to distinguish between uniform and non-uniform polynomial-time
tractability with respect to a parameter. An instance of a parameterized
problem is a pair $(I,k)$ where $I$ is the \emph{main part} and $k$ is
the \emph{parameter}; the latter is usually a non-negative integer.  A
parameterized problem is \emph{fixed-parameter tractable} if there exist
a computable function $f$ and a constant~$c$ such that instances $(I,k)$
of size $n$ can be solved in time $O(f(k)n^c)$. $\FPT$ is the class of
all fixed-parameter tractable decision problems. Fixed-parameter
tractable problems are also called \emph{uniform polynomial-time
  tractable} because if $k$ is considered constant, then instances with
parameter $k$ can be solved in polynomial time where the order of the
polynomial is independent of $k$ (in contrast to non-uniform
polynomial-time running times such as $n^k$).

Parameterized complexity offers a completeness theory similar to the
theory of NP-completeness.  One uses \emph{parameterized reductions}
which are many-one reductions where the parameter for one problem maps
into the parameter for the other. More specifically, problem $L$ reduces
to problem $L'$ if there is a mapping $R$ from instances of $L$ to
instances of $L'$ such that (i)~$(I,k)$ is a yes-instance of $L$ if and
only if $(I',k')=R(I, k)$ is a yes-instance of~$L'$, (ii)~$k'\leq g(k)$ for
a computable function $g$, and (iii)~$R$ can be computed in time
$O(f(k)n^c)$ where $f$ is a computable function, $c$ is a constant, and
$n$ denotes the size of $(I,k)$.  The parameterized complexity class
$\W[1]$ is considered as the parameterized analog to NP. For example,
the parameterized Maximum Clique problem (given a graph $G$ and a
parameter $k\geq 0$, does $G$ contain a complete subgraph on $k$
vertices?) is $\W[1]$-complete under parameterized reductions. Note that
there exists a trivial non-uniform polynomial-time $n^k$ algorithm for
the Maximum Clique problems that checks all sets of $k$ vertices.
$\FPT\neq \W[1]$ is a widely accepted complexity theoretic assumption
\cite{DowneyFellows99}.  For example, $\FPT=\W[1]$ implies the (unlikely)
existence of a $2^{o(n)}$ algorithm for $n$\hy variable
3SAT~\cite{ImpagliazzoPaturiZane01,FlumGrohe06}. A first parameterized
analysis of probabilistic network structure learning using structural
parameters such as treewidth has recently been carried out
by~\citex{OrdyniakSzeider10}.

The algorithm from Theorem \ref{the:xp} considers
$O(n^{3k})$ relevant choices for the set $S = D\cup D'$,
and for each fixed choice of $S$ the running time is polynomial.
Thus, for restrictions of the problem for which the enumeration
of all relevant sets $S$ is fixed parameter tractable, one obtains
an FPT algorithm. One such restriction requires that $S=D \cup D'$ is an in-tree,
i.e., a directed tree where every arc is directed towards a designated root,
and each node has a bounded number of potential parent sets.

\begin{theorem}\label{thm:fptrestriction}
  The $k$-branching problem is fixed-parameter tractable if we require
  that (i)~the set $S=D\cup D'$ of arcs is an in-tree
  and  (ii)~each node has a bounded number of potential parent sets. 
\end{theorem}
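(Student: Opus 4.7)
The plan is to keep the algorithm of Theorem~\ref{the:xp} intact and replace only its outer enumeration over pairs $(D,D')$, which costs $O(n^{3k})$ in general, by an enumeration with running time of the form $f(k)\cdot n^{O(1)}$. Since the weighted matroid intersection subroutine of Theorem~\ref{the:xp} takes polynomial time for each fixed $S=D\cup D'$, such an enumeration immediately yields an FPT algorithm.

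First I would bound $|S|$. Because no two arcs in $D'$ share a head and $H(D')\subseteq H(D)$, we have $|D'|\le|H(D)|\le k$, so $|S|\le 2k$ and the in-tree $S$ required by condition~(i) has at most $2k+1$ nodes. Next, the weight formula of the algorithm uses $f_v(S_v)$ only for $v\in H(S)$, so I may restrict attention to those $S$ in which every $S_v$ is one of the at most $p$ listed potential parent sets of~$v$, where $p$ is the constant from condition~(ii); every feasible $k$-branching $A$ in the input already satisfies this, since $f_v$ is only defined on $v$'s listed parent sets.

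The enumeration itself proceeds recursively. I guess the root $r$ of the in-tree in $n$ ways and then grow $S$ top-down: at each frontier node $v$ of the partial tree I branch into $p+1$ possibilities, namely $S_v=\emptyset$ (marking $v$ a leaf of $S$) or $S_v$ equal to one of $v$'s listed parent sets, whose elements are then added to the frontier. I prune any branch that reuses a node already present (which would destroy the tree structure), exceeds the bound of $2k+1$ nodes, or fails the splittability test $|S|-|\{v:|S_v|\ge 2\}|\le k$, which is precisely the condition under which $S$ can be partitioned into a valid pair $(D,D')$ with $|D|\le k$. Since every generated tree has at most $2k+1$ nodes, the recursion has at most $(p+1)^{2k+1}$ leaves per root and therefore produces $O(n\cdot(p+1)^{2k+1})$ distinct candidate sets $S$. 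Composing with the $O(n^4)$ matroid intersection cost per $S$ gives total running time $(p+1)^{O(k)}\cdot n^{O(1)}$, as required for FPT.

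The main obstacle I expect is the correctness bookkeeping rather than any combinatorial innovation: I must verify that every in-tree $S$ arising from an optimal $k$-branching satisfying condition~(i) is indeed generated by the traversal, that the splittability inequality above characterizes exactly when a valid partition $(D,D')$ of $S$ exists, and that each relevant $S$ is visited at least once by the recursion. The combinatorial heart of the reduction is that condition~(i) caps the tree size by $O(k)$ while condition~(ii) caps the branching factor by $p+1$; without either, the enumeration would blow up to a non-FPT bound.
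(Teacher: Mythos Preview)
Your proposal is correct and follows essentially the same strategy as the paper: guess the root of the in-tree, grow it outward by repeatedly attaching a listed parent set to a frontier node, and call the matroid-intersection subroutine for each completed $S$. The bound $|S|\le 2k$ and the use of condition~(ii) to cap the branching factor are exactly the two ingredients the paper relies on.

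There are two small bookkeeping differences worth noting. First, the paper tracks the split $(D,D')$ explicitly during the growth, branching at each step on which (if any) arc of the newly added parent set goes into~$D'$; you instead observe that the weight formula depends only on $S$, enumerate $S$ alone, and apply the post-hoc test $|S|-|\{v:|S_v|\ge 2\}|\le k$ for the existence of a valid split. Your test is correct (for $|S_v|=1$ the single arc must lie in $D$, since $H(D')\subseteq H(D)$, and for $|S_v|\ge 2$ one arc may go to $D'$), and it saves a factor of roughly $k^k$ in the running time compared to the paper's $O(k^{2k}c^kn^5)$. Second, the paper lets some current leaves remain leaves by branching on \emph{which} leaf to expand next, whereas you process frontier nodes in a fixed order and add an explicit ``$S_v=\emptyset$'' option; both mechanisms enumerate the same family of in-trees, yours without duplication.
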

\begin{proof}
 To compute a $k$-branching $A$, the algorithm guesses its deletion set
 $D$ and the set $D' = \{ (u,v)\in A \setminus D: v\in H(D)\}$.
 As $A$ is a $k$-branching, $|D|\le k$ and for every $v\in H(D)$
 there is at most one arc in $D'$ with head $v$.
 The algorithm first guesses the root $r$ for the in-tree $S$.
 Then it goes over all possible choices for $D$ and $D'$ as follows,
 until $D$ has at least $k$ arcs.
 
 Guess a leaf $\ell$ of $S$ (initially, $r$ is the unique leaf of $S$), and guess
 a non-empty parent set $P$ for $\ell$ in $A$. If $|D|+|P|+1>k$, then backtrack.
 Otherwise, choose at most one arc $(p,\ell)$ to add to $D'$, where $p\in P$, and add all other arcs
 from a node from $P$ to $\ell$ to $D$ (if $|P|=1$, no arc is added to $D'$).
 Now, check whether the current choice for $S=D\cup D'$ leads to a $k$-branching
 by checking whether $\overline{S}$ is acyclic and
 using the matroids $M_1(S)$ and $M_2(S)$ as in Theorem \ref{the:xp}.

 There are at most $n$ choices for $r$. The in-tree $S$ is expanded in at most
 $k$ steps, as each step adds at least one arc to $D$. In each step, $\ell$ is chosen
 among at most $k+1$ leaves, there is a constant number of choices for its parent set $P$ and
 at most $k+2$ choices for adding (or not) an arc $(p,\ell)$, with $p\in P$, to $D'$ (as $|P|\le k+1$).
 The acyclicity check for $\overline{S}$ and the weighted matroid intersection
 can be computed in time $O(n^4)$, leading to a total running time of
 $O(k^{2k}c^kn^5)$, where $c$ is such that every node has at most $c$ potential parent sets.
%
\end{proof}
Condition (i) in Theorem \ref{thm:fptrestriction} may be replaced
by other conditions requiring the connectivity of $D$ or a small distance
between arcs from $D$, giving other fixed-parameter tractable
restrictions of the $k$-branching problem.

The following theorem shows that an exponential dependency on $k$
or some other parameter is necessary since 
the $k$-branching problem remains NP-hard under the restrictions
given above.
\begin{theorem}\label{thm:fptrestrictionnp}
  The $k$-branching problem is NP-hard even if we require that
   (i)~the set $S=D\cup D'$ of arcs is an in-tree
  and  (ii)~each node has at most $3$ potential parent sets. 
\end{theorem}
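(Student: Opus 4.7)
The plan is to reduce from \textsc{3-SAT} (or a bounded-occurrence variant of it) in polynomial time. Given a formula $\varphi$ on variables $x_1,\ldots,x_n$ and clauses $C_1,\ldots,C_m$, I would construct a $k$-branching instance in which (i) the multi-parent skeleton $S=D\cup D'$ is forced to be an in-tree and (ii) every node has at most three potential parent sets, while the optimal $k$-branching attains a given threshold score if and only if $\varphi$ is satisfiable.

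First, I would build a directed ``spine'' path whose $m$ internal nodes $s_1,\ldots,s_m$ correspond to the clauses of $\varphi$. A spine is the simplest kind of in-tree, and the construction is designed so that this spine becomes exactly the set $S$ in any solution attaining the threshold. Each spine node $s_j$ corresponding to clause $C_j$ has at most three potential parent sets, one per literal of $C_j$: selecting a particular set represents ``literal $\ell$ satisfies $C_j$''. The spine arc from $s_{j-1}$ to $s_j$ plays the role of the kept parent (an arc of $D'$), while the arc from the chosen literal node plays the role of a deleted parent (an arc of $D$), so that each clause node contributes exactly one arc to $D$ and at most one arc to $D'$.

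Second, I would attach variable gadgets off the spine that propagate a consistent truth assignment into the literal-choice arcs. Each variable gadget consists of a small number of nodes, each with at most three parent sets, arranged so that its local score is maximised only when every node in the gadget agrees on a single Boolean value for the corresponding variable. The scoring function rewards each spine node precisely when the literal declared to satisfy $C_j$ is \true\ under the choices of the variable gadgets, so the threshold score is attainable exactly when the declared literals witness a satisfying assignment of $\varphi$. Setting $k=m$ would make the deletion budget tight, forcing each $s_j$ to lose exactly one parent arc and forbidding any extra multi-parent nodes off the spine.

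The hard part will be simultaneously respecting both restrictions while still encoding an arbitrary \textsc{3-SAT} instance. The in-tree constraint concentrates all multi-parent nodes into a single rooted tree, which is awkward because a variable may appear in many clauses at distant points of the spine without violating polytree-acyclicity or introducing extra parent-set options at any individual node. I expect this can be handled by starting from a bounded-occurrence variant of \textsc{3-SAT} and inserting chains of auxiliary ``copy'' nodes — each with at most three parent sets — that broadcast a variable's value across the instance. Ruling out competing high-scoring solutions whose $S$ is not an in-tree (for instance, by pushing the scoring weights so that any deviation from the intended spine structure violates acyclicity or wastes the budget $k$) will be the most delicate step of the argument.
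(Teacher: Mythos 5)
Your high-level architecture matches the paper's: a reduction from a bounded-occurrence variant of \textsc{3-SAT} (the paper uses \textsc{3-SAT-2}, where each literal occurs in at most two clauses), a directed spine that realizes the in-tree $S$, clause nodes with at most three potential parent sets (one per literal), and variable gadgets that must commit to a single truth value. However, the proposal has a genuine gap exactly where you flag ``the most delicate step'': you never specify the mechanism that couples a variable gadget's committed value to the literal arcs chosen at distant clause nodes. Local scores are functions of each node's own parent set, so no choice of weights can make a clause node's reward depend on the internal state of a faraway gadget; ``pushing the scoring weights'' cannot create this coupling. The coupling must be structural, and the paper obtains it from the polytree (skeleton-acyclicity) constraint itself: for each of the at most two occurrences of a literal $x_i$ there is a dedicated node $x_i^l$, the variable node $x_i$ takes as parents either both positive-occurrence nodes or both negative-occurrence nodes (encoding $x_i=\false$ or $x_i=\true$ respectively), and both $x_i$ and every clause node $C_j$ are hooked into a single path $p_1,\dots,p_{n+m}$. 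Consequently $x_i^l$ cannot simultaneously be a parent of $x_i$ and of $C_j$, since the two arcs together with the path segment $x_i,p_i,\dots,p_{n+j},C_j$ would close an undirected cycle. That is the entire soundness argument, and it is absent from your sketch.

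Two further points worth noting. First, your role assignment at clause nodes (spine arc in $D'$, literal arc in $D$) differs from the paper, where clause nodes have in-degree one and the multi-parent nodes are the spine nodes $p_i$ (with parent set $\{x_i,p_{i-1}\}$ or $\{C_j,p_{n+j-1}\}$) and the variable nodes $x_i$ (with two occurrence-node parents); this is what makes $S$ an in-tree rooted at the last spine node and bounds $|D|$ by $2n+m$. Your variant might be made to work, but you would need to verify that $S$ is still a single in-tree and that each node keeps at most three potential parent sets. Second, your proposed ``copy chains'' to broadcast a variable's value are unnecessary once you start from \textsc{3-SAT-2}: with at most two occurrences per literal, the paper simply gives the variable node the two occurrence nodes of one polarity as a parent set, and no propagation gadget is needed. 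As written, the proposal identifies the right skeleton of the reduction but omits the acyclicity-based consistency argument that carries the proof.
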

\begin{proof}
  We devise a polynomial reduction from
  \textsc{$3$-SAT-$2$} a version of
  \textsc{3-SATISFIABILITY} where every literal occurs at most in two
  clauses. \textsc{$3$-SAT-$2$} is well known to be
  NP-hard~\cite{GareyJohnson79}. Our reduction uses the same ideas as
  the proof of Theorem~6 in~\cite{Dasgupta99}. Let $\Phi$ be an instance of
  \textsc{$3$-SAT-$2$} with clauses $C_1,\dotso,C_m$ and variables
  $x_1,\dotso,x_n$.   We define the set $N$ of nodes as follows. For every variable $x_i$
  in $\Phi$ the set $N$ contains the nodes
  $p_i,x_i,x_i^1,x_i^2,\overline{x}_i^1$ and $\overline{x}_i^2$. Furthermore, for
  every clause $C_j$ the set $N$ contains the nodes $p_{n+j}$ and
  $C_j$. Let $1 \leq i \leq n$, $1 \leq j \leq m$, and $1\leq l
  \leq 2$. We set $f(C_j,x_i^l)=1$ if the clause $C_j$ is the $l$-th
  clause that contains the literal $x_i$. Similarly, we set
  $f(C_j,\overline{x}_i^l)=1$ if the clause $C_j$ is the $l$-th
  clause that contains the literal $\overline{x}_i$. We set
  $f(x_i,\{x_i^1,x_i^2\})=f(x_i,\{\overline{x}_i^1,\overline{x}_i^2\})=1$,
  $f(p_{1},\{x_{1}\})=f(p_{i},\{x_{i},p_{i-1}\})=1$ for every $1 < i
  \leq n$, and
  $f(p_{n+j},\{C_j,p_{n+j-1}\})=1$ for every
  $1 \leq j \leq m$. Furthermore, we set $f(v,P)=0$ for all the remaining
  combinations of $v \in N$ and $P \subseteq N$. This completes the
  construction of $N$ and $f$. Observe 
  that every node of $N$ has at most $3$ potential parent sets.
  This completes our construction. We will have shown the theorem
  after showing the following claim.

  \emph{Claim: $\Phi$ is satisfiable if and only if there is a
    $2n+m$-branching~$D$ such that $f(D)\geq 2(m+n)$, the set $S=D \cup D'$ of arcs is an in-tree, and 
    each node of $N$ has at most $3$ potential parent sets.  }

  \begin{figure}
    \centering
    \begin{tikzpicture}[scale=1]
      \tikzstyle{every node}=[circle,inner sep=2pt,draw,color=black]
      \tikzstyle{every edge}=[draw]
      
      \begin{scope}
        \draw
        (0,0) node[label=above:$p_1$] (p1) {}
        (2,0) node[label=above:$p_2$] (p2) {}
        (4,0) node[label=above:$p_3$] (p3) {}

        (0,-1) node[label=left:$x_1$] (x1) {}
        (-0.5,-2) node[label=left:$x_1^1$] (x11) {}
        (-0.5,-3) node[label=left:$x_1^2$] (x12) {}
        (0.5,-2) node[label=left:$\overline{x}_1^1$] (nx11) {}
        (0.5,-3) node[label=left:$\overline{x}_1^2$] (nx12) {}

        (2,-1) node[label=left:$x_2$] (x2) {}
        (1.5,-2) node[label=left:$x_2^1$] (x21) {}
        (1.5,-3) node[label=left:$x_2^2$] (x22) {}
        (2.5,-2) node[label=left:$\overline{x}_2^1$] (nx21) {}
        (2.5,-3) node[label=left:$\overline{x}_2^2$] (nx22) {}

        (4,-1) node[label=left:$x_3$] (x3) {}
        (3.5,-2) node[label=left:$x_3^1$] (x31) {}
        (3.5,-3) node[label=left:$x_3^2$] (x32) {}
        (4.5,-2) node[label=left:$\overline{x}_3^1$] (nx31) {}
        (4.5,-3) node[label=left:$\overline{x}_3^2$] (nx32) {}

        (0,-4) node[label=left:$C_3$] (c3) {}
        (2,-4) node[label=left:$C_2$] (c2) {}
        (4,-4) node[label=left:$C_1$] (c1) {}

        (4,-5) node[label=below:$p_4$] (p4) {}
        (2,-5) node[label=below:$p_5$] (p5) {}
        (0,-5) node[label=below:$p_6$] (p6) {}
        ;
        
        \draw
        (p1) edge[-latex] (p2)
        (p2) edge[-latex] (p3)
        (p3) edge[-latex, bend left] (p4)
        (p4) edge[-latex] (p5)
        (p5) edge[-latex] (p6)

        (x1) edge[-latex] (p1)
        (x2) edge[-latex] (p2)
        (x3) edge[-latex] (p3)

        (c1) edge[-latex] (p4)
        (c2) edge[-latex] (p5)
        (c3) edge[-latex] (p6)

        (nx11) edge[-latex] (x1)
        (nx12) edge[-latex] (x1)

        (nx21) edge[-latex] (x2)
        (nx22) edge[-latex] (x2)

        (x31) edge[-latex] (x3)
        (x32) edge[-latex] (x3)

        (x12) edge[-latex] (c3)
        (x22) edge[-latex] (c2)
        (nx31) edge[-latex] (c1)
        ;
      \end{scope}
    \end{tikzpicture}
    \caption{An optimal $2n+m$-branching $D$ for the formula
      $\Phi=C_1 \land C_2 \land C_3$ with $C_1=x_1\lor x_2 \lor
      \overline{x}_3$, $C_2=\overline{x}_1\lor x_2 \lor x_3$, and $C_3=x_1\lor \overline{x}_2
      \lor \overline{x}_3$ according to the construction given in the proof of Theorem~\ref{thm:fptrestrictionnp}.}
    \label{fig:np-hard-branching}
  \end{figure}
  $(\Rightarrow):$ Suppose that the formula $\Phi$
  is satisfiable and let $\beta$ be a satisfying assignment for
  $\Phi$. Furthermore, for every $1 \leq j
  \leq m$ let $l_j$ be a literal of $C_j$ that is set to true by
  $\beta$. We construct a $2n+m$-branching $D$ as follows. For every $1\leq j
  \leq m$ the digraph $D$ contains an arc $(x_i^l,C_j)$ if $l_j=x_i$ and $C_j$ is the $l$-th
  clause that contains $x_i$ and an arc
  $(\overline{x}_i^l,C_j)$ if $l_j=\overline{x}_i$
  and $C_j$ is the $l$-th clause that contains $\overline{x}_i$ for
  some $1 \leq i \leq n$ and $1 \leq l \leq 2$. Furthermore, for every $1 \leq i \leq n$ the
  digraph $D$ contains the arcs $(x_i^1,x_i)$ and $(x_i^2,x_i)$ if
  $\beta(x_i)=\false$ and the arcs $(\overline{x}_i^1,x_i)$ and $(\overline{x}_i^2,x_i)$ if
  $\beta(x_i)=\true$. Last but not least $D$ contains the arcs
  $(x_i,p_i)$, $(C_j,p_{n+j})$ and $(p_l,p_{l+1})$ for every $1 \leq i
  \leq n$, $1 \leq j \leq m$, and $1 \leq l <
  m+n$. Figure~\ref{fig:np-hard-branching} shows an optimal
  $2n+m$-branching $D$ for some $3$-SAT-$2$ formula.
  It is easy to see
  that $D$ is a $2n+m$-branching such that $f(D)=2(m+n)$ 
  and the set $S=D\cup D'$ of arcs is an in-tree.

  $(\Leftarrow):$ Suppose there is a 
  $2n+m$-branching $D$ such that $f(D)\geq 2(m+n)$. 
  Because $f(D) \geq 2(m+n)$ it follows
  that every node of $N$ achieves its maximum score in $D$. Hence, $D$
  has to contain the arcs $(x_i,p_i)$, $(C_j,p_{n+j})$,
  $(p_l,p_{l+1})$, for every $1 \leq i \leq n$, $1\leq j \leq m$, and
  $1\leq l <m+n$. For the same reasons $D$ has to contain either the
  arcs $(x_i^1,x_i)$ and $(x_i^2,x_i)$ or the arcs
  $(\overline{x}_i^1,x_i)$ and $(\overline{x}_i^2,x_i)$ for every $1
  \leq i \leq n$. Furthermore, for every $1 \leq j \leq m$ the
  $2n+m$-branching $D$ has to contain one arc of the form $(x_i^l,C_j)$ or
  $(\overline{x}_i^l,C_j)$ where $C_j$ is the $l$-th
  clause that contains $x_i$ or $\overline{x}_i$, respectively, for some $1 \leq i \leq n$ and $1
  \leq l \leq 2$. Let $1 \leq i \leq n$, $1 \leq j
  \leq m$, and $1 \leq l \leq 2$. We first show
  that whenever $D$ contains an arc $(x_i^l,x_i)$ then $D$ contains no arc of
  the form $(x_i^l,C_j)$ and similarly if $D$ contains an arc
  $(\overline{x}_i^l,x_i)$ then $D$ contains no arc of the
  form $(\overline{x}_i,C_j)$. Suppose for a contradiction that $D$
  contains an arc $(x_i^l,x_i)$ together with an arc $(x_i^l,C_j)$ or
  an arc $(x_i^l,x_i)$ together with an arc $(x_i^l,C_j)$.
  In the first case $D$ contains the undirected cycle
  $(x_i^l,x_i,p_i,\dotso,p_{n+j},C_j,x_i^l)$ and in the second case
  $D$ contains the cycle $(\overline{x}_i^l,x_i,p_i,\dotso,p_{n+j},C_j,\overline{x}_i^l)$
  contradicting our assumption that $D$ is a $2n+m$-branching. 
  It now follows that the assignment $\beta$ with
  $\beta(x_i)=\true$ if $D$ does not contain the arcs $(x_i^1,x_i)$
  and $(x_i^2,x_i)$ and $\beta(x_i)=\false$ if $D$ does not contain
  the arcs $(\overline{x}_i^1,x_i)$ and $(\overline{x}_i^2,x_i)$ is a
  satisfying assignment for $\Phi$.
\end{proof}
So far we have measured the difference of a polytree to branchings in terms of the number of arcs to be deleted. Next we investigate the consequences of measuring the difference by the number of nodes to be deleted. We call a polytree $A$ a {\em $k$-node branching} if there exists a set of at most $k$ nodes $X \subseteq A$ such that $A\setminus X$ is a
branching. The {\em $k$-node branching problem} is to find a
$k$-node branching $A$ that maximizes $f(A)$.
Clearly every $k$-branching is a $k$-node branching, but
the reverse does not hold. In other words, the $k$-node branching problem
generalizes the $k$-branching problem. 

In the following we show that the $k$-node branching problem is hard
for the parameterized complexity class $\W[1]$; this provides strong evidence that the problem
is not fixed-parameter tractable.

\begin{theorem}\label{thm:hardness-knodeb}
  The $k$-node branching problem is $\W[1]$-hard.
\end{theorem}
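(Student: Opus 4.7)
The plan is to prove W[1]-hardness by a parameterized reduction from \textsc{Multicolored Clique}, a canonical W[1]-complete problem parameterized by the number of color classes~$k$. Given an instance $(G=(V,E),V_1,\dots,V_k)$ of \textsc{Multicolored Clique}, I would construct an instance of the $k'$-node branching problem with $k'=g(k)$ for a computable function~$g$, specifying the node set~$N$, local scoring functions $f_v$, and a threshold~$T$ in such a way that there is a $k'$-node branching of score at least $T$ if and only if $G$ contains a multicolored $k$-clique.

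The construction would rely on two kinds of gadgets, together consuming $k'=O(k^2)$ deletable nodes. First, for each color class~$V_i$, a \emph{class selector} $s_i$ (placed in the deletion set $X$) whose potential parent sets are $\{n_v\}$ for $v\in V_i$, each scored by a large fixed bonus~$M$; this forces the selection of a \emph{representative} $v_i\in V_i$ for each class. Second, for each pair $(i,j)$ with $i<j$, an \emph{edge-verifier} gadget (with its internal nodes also placed in~$X$) whose local scoring rewards a bonus of $1$ exactly when the chosen parent set witnesses the adjacency $\{v_i,v_j\}\in E$ between class representatives. Summing all bonuses would yield the threshold $T=kM+\binom{k}{2}$, which is reached precisely when every selector is instantiated and every pairwise edge among the representatives is present in~$E$, i.e., when the representatives form a multicolored $k$-clique.

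The main obstacle is the global polytree constraint on~$A$, which is \emph{not} relaxed by the deletion of~$X$: the skeleton of~$A$ must be acyclic, yet encoding all $\binom{k}{2}$ pairwise verifications through the same canonical vertex nodes $n_v$ would inevitably produce a $K_k$-shaped subgraph in the skeleton, which contains many undirected cycles. The crux of the construction is to circumvent this by introducing, for each pair $(i,j)$, \emph{private copy nodes} that are accessed only by the verifier~$e_{ij}$, and by propagating the choice of~$v_i$ from~$s_i$ into those copies through the deletable selector nodes themselves, whose undirected degree is essentially unconstrained so long as the overall skeleton remains a forest. Once this tree-shaped routing is in place, correctness decomposes into two routine directions: given a multicolored $k$-clique $\{v_1,\dots,v_k\}$, instantiate all selectors and verifiers directly to realise a $k'$-node branching of score~$T$; and, conversely, any $k'$-node branching of score at least~$T$ must trigger every selector bonus and every verifier bonus, so that the representative vertices read off from the selectors' parent sets necessarily form a multicolored $k$-clique in~$G$. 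Since the reduction runs in polynomial time and $k'=g(k)$ is a computable function of~$k$, W[1]-hardness follows.
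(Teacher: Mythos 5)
Your high-level plan matches the paper's: both reduce from Partitioned/Multicolored Clique, both use a selector gadget per color class and a verifier gadget per pair of classes, both spend $k'=k+\binom{k}{2}$ deletable nodes on exactly those gadgets, and both correctly identify the central obstacle, namely that the polytree (skeleton-acyclicity) constraint is \emph{not} relaxed by node deletion and would be violated if all $\binom{k}{2}$ verifications routed through the same canonical vertex nodes. The paper likewise resolves this by introducing $k$ copies $v^1,\dotsc,v^k$ of each vertex.

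However, there is a genuine gap at the crux of the reduction: the mechanism that forces the verifier for pair $(i,j)$ to use the \emph{same} representative $v_i$ that the selector for class $i$ chose. As you describe it, the selector $s_i$ has singleton potential parent sets $\{n_v\}$ and the verifier $e_{ij}$ reads private, pair-specific copy nodes; with a tree-shaped ``routing'' nothing closes an undirected cycle when $e_{ij}$ uses a copy of some $u\neq v_i$, so soundness fails --- the score threshold $T$ is already reached whenever every pair of color classes has at least one edge between them, clique or not. The phrase ``propagating the choice of $v_i$ \dots through the deletable selector nodes'' gestures at the missing idea but does not supply it. The paper's solution is to invert the encoding: the selector $c_i$ takes as parents \emph{all} copies of \emph{all non-selected} vertices of $V_i$, together with all incident verifier nodes $a_{ij}$ (this is legal precisely because $c_i$ is a deleted node and may have unbounded in-degree). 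Then a copy $u^j$ of a non-selected vertex $u$ is already adjacent to $c_i$ in the skeleton, so if $a_{ij}$ also took $u^j$ as a parent the skeleton would contain the triangle $u^j$--$c_i$--$a_{ij}$; acyclicity therefore forces $a_{ij}$ to use only copies of the unique vertex omitted from $c_i$'s parent set. This complement trick (selection by \emph{omission}, with consistency enforced by cycle creation through the selector) is the technical heart of the proof and is absent from your proposal; without it, or an equivalent device, the reduction does not go through.
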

\begin{proof}
  We devise a parameterized reduction from the following problem, called
  Partitioned Clique, which is well-known to be $\W[1]$-complete~for
  parameter~$k$~\cite{Pietrzak03}.  The Instance is a $k$\hy partite
  graph $G=(V,E)$ with partition $V_1,\dotso,V_k$ such that
  $\Card{V_i}=n$ for every $1\leq i \leq k$.  The question is whether
  there are nodes $v_1,\dotso,v_k$ such that $v_i\in V_i$ for $1\leq i
  \leq k$ and $\{v_i,v_j\}\in E$ for $1\leq i < j \leq k$? (The graph
  $K=(\{v_1,\dots,v_k\},\SB \{v_i,v_j\} \SM 1\leq i < j \leq k\SE)$ is a
  \emph{$k$-clique} of $G$.)

  Let $G=(V,E)$ be an instance of this problem with partition $V_1,\dots,V_k$,
  $\Card{V_1}=\dots=\Card{V_k}=n$, and parameter $k$.
  Let $k'=\binom{k}{2}+k$, $\alpha=1$, and
  $s=k'\alpha$. 
  Let $A=\SB a_{ij}\SM 1 \leq i < j \leq k\SE$ and
  $A_i=\SB a_{lk} \in A \SM l=i$ or $k=i\SE$ for every $1 \leq i \leq
  k$. Then $N$ is defined as $N=A \cup
  \{c_1,\dotsc, c_k\} \cup \SB v^1,\dotsc,v^k \SM v \in V \SE$. 
  Let $V_i^w=\SB v^1, \dotsc, v^k
  \SM v \in V_i \textup{ and }v \neq w \SE$. We define the score function 
  $f$ as follows.
  We set $f(c_i,A_i \cup
  V_i^w)=\alpha$ for every $1 \leq i \leq k$ and $w \in V_i$, and
  $f(a_{ij},\{u^j,w^i\})=\alpha$ for every $1 \leq i < j \leq k$, $u \in
  V_i$, $w \in V_j$, and $\{u,w\} \in E(G)$.  Furthermore, we set
  $f(v,P)=0$ for all the remaining combinations of $v$ and $P$.
  This completes our construction. We will have the
  theorem after showing the following claim.

  \emph{Claim: $G$ has a $k$-clique if and only if there is a
    $k'$-node branching~$D$ such that $f(D) \geq s$.  }

  \begin{figure}
    \centering
    \begin{tabular}{c|cc}
      \begin{tikzpicture}[scale=0.5]
        \tikzstyle{every circle node}=[circle,inner sep=1pt,draw]
        \begin{scope}[label distance=4pt, scale=1]
          \begin{scope}
            \draw
            (0,2cm)
            +(-1cm,0) node[circle] (v11) {}
            +(0,0) node[circle,fill] (v12) {}
            +(1cm,0) node[circle] (v13) {}

            +(0,0) node[rectangle,minimum width=1.3cm,minimum
            height=0.4cm,draw,rotate around={0:(0,0)}] {}
            ;
            
            \draw[rotate around={-120:(0,0)}]
            (0,2cm)
            +(-1cm,0) node[circle] (v21) {}
            +(0,0) node[circle,fill] (v22) {}
            +(1cm,0) node[circle] (v23) {}

            +(0,0) node[rectangle,minimum width=1.3cm,minimum
            height=0.4cm,draw,rotate around={-120:(0,0)}] {}
            ;
            
            \draw[rotate around={120:(0,0)}]
            (0,2cm)
            +(-1cm,0) node[circle] (v31) {}
            +(0,0) node[circle,fill] (v32) {}
            +(1cm,0) node[circle] (v33) {}

            +(0,0) node[rectangle,minimum width=1.3cm,minimum
            height=0.4cm,draw,rotate around={120:(0,0)}] {}
            ;
      
            \draw
            (v11) edge[] (v23)
            (v12) edge[thick] (v22)
            (v13) edge[] (v21)
            
            (v21) edge[] (v31)
            (v22) edge[thick] (v32)
            
            (v33) edge[] (v13)
            (v32) edge[thick] (v12)
            (v31) edge[] (v11)
            ;
          \end{scope}
        \end{scope}
      \end{tikzpicture}
      
      &
      &
      \begin{tikzpicture}[scale=0.5]
        \tikzstyle{every circle node}=[circle,inner sep=1pt,draw]
        \begin{scope}[label distance=4pt, scale=1]
          \begin{scope}
            \draw
            (0,2cm)
            +(-1cm,0) node[circle] (v111) {}
            +(0,0) node[circle,fill] (v121) {}
            +(1cm,0) node[circle] (v131) {}

            +(-1cm,0.5) node[circle] (v112) {}
            +(0,0.5) node[circle,fill] (v122) {}
            +(1cm,0.5) node[circle] (v132) {}

            +(-1cm,1) node[circle] (v113) {}
            +(0,1) node[circle,fill] (v123) {}
            +(1cm,1) node[circle] (v133) {}

            +(0,2.5) node[circle] (c1) {}

            +(0,1.25) node[rectangle,minimum width=1.3cm,minimum height=1.5cm,draw] {}
            ;

            \draw[rotate around={-120:(0,0)}]
            (0,2cm)
            +(-1cm,0) node[circle] (v211) {}
            +(0,0) node[circle,fill] (v221) {}
            +(1cm,0) node[circle] (v231) {}

            +(-1cm,0.5) node[circle] (v212) {}
            +(0,0.5) node[circle,fill] (v222) {}
            +(1cm,0.5) node[circle] (v232) {}

            +(-1cm,1) node[circle] (v213) {}
            +(0,1) node[circle,fill] (v223) {}
            +(1cm,1) node[circle] (v233) {}

            +(0,2.5) node[circle] (c2) {}

            +(0,1.25) node[rectangle,minimum width=1.3cm,minimum
            height=1.5cm,draw,rotate around={-120:(0,0)}] {}
            ;
 
            \draw[rotate around={120:(0,0)}]
            (0,2cm)
            +(-1cm,0) node[circle] (v311) {}
            +(0,0) node[circle,fill] (v321) {}
            +(1cm,0) node[circle] (v331) {}

            +(-1cm,0.5) node[circle] (v312) {}
            +(0,0.5) node[circle,fill] (v322) {}
            +(1cm,0.5) node[circle] (v332) {}

            +(-1cm,1) node[circle] (v313) {}
            +(0,1) node[circle,fill] (v323) {}
            +(1cm,1) node[circle] (v333) {}

            +(0,2.5) node[circle] (c3) {}

            +(0,1.25) node[rectangle,minimum width=1.3cm,minimum
            height=1.5cm,draw,rotate around={120:(0,0)}] {}
            ;
            
            \draw
            (30:1) node[circle] (a12) {}
            (150:1) node[circle] (a13) {}
            (-90:1) node[circle] (a23) {};

            \draw
            (v111) edge[-latex] (c1)
            (v112) edge[-latex] (c1)
            (v113) edge[-latex] (c1)

            (v131) edge[-latex] (c1)
            (v132) edge[-latex] (c1)
            (v133) edge[-latex] (c1)

            (a12) edge[-latex,bend right=60] (c1)
            (a13) edge[-latex,bend left=60] (c1)

            (v122) edge[-latex,bend left=10] (a12)
            (v123) edge[-latex,bend right=10] (a13)
            ;

            \draw
            (v211) edge[-latex] (c2)
            (v212) edge[-latex] (c2)
            (v213) edge[-latex] (c2)

            (v231) edge[-latex] (c2)
            (v232) edge[-latex] (c2)
            (v233) edge[-latex] (c2)

            (a12) edge[-latex,bend left=60] (c2)
            (a23) edge[-latex,bend right=60] (c2)

            (v221) edge[-latex,bend right=10] (a12)
            (v223) edge[-latex,bend left=10] (a23)
            ;

            \draw
            (v311) edge[-latex] (c3)
            (v312) edge[-latex] (c3)
            (v313) edge[-latex] (c3)

            (v331) edge[-latex] (c3)
            (v332) edge[-latex] (c3)
            (v333) edge[-latex] (c3)

            (a13) edge[-latex,bend right=60] (c3)
            (a23) edge[-latex,bend left=60] (c3)

            (v321) edge[-latex,bend left=10] (a13)
            (v322) edge[-latex,bend right=10] (a23)
            ;

          \end{scope}
        \end{scope}
      \end{tikzpicture}
      \\
      $G$

      &

      &

      $D$
    \end{tabular}
    \caption{An example graph $G$ ($k=3$) together with an optimal $k'$-node branching
      $D$ with $f(D)\geq s$ according to the construction given in the proof
      of Theorem~\ref{thm:hardness-knodeb}.}
    \label{fig:hardnessknodeb}
  \end{figure}
  $(\Rightarrow):$ Suppose that $G$ has a $k$-clique
  $K$. Then it is easy to see that the DAG $D$ on $N$ defined by the arc set 
  $\SB (v^j,a_{ij}),(v^j,a_{ji}) \SM v \in V(K) \cap V_i \textup{ and }1 \leq i, j \leq k \SE \cup
  \SB
  (v^i,a_{ij}),(v^i,a_{ji}) \SM v \in V(K) \cap V_j \textup{ and }1 \leq i,j
  \leq k \SE \cup \SB (a_{ij},c_i) \SM 1 \leq i < j \leq k \SE \cup \SB
  (a_{ij},c_j) \SM 1 \leq i < j \leq k \SE \cup \SB (v^j,c_i) \SM v \in
  V_i \setminus (\bigcup_{e \in E(K)}e) \textup{ and }1 \leq i, j \leq
  k\SE$ is a $k'$-node branching and
  $f(D)=s$. Figure~\ref{fig:hardnessknodeb} shows an optimal $k'$-node
  branching $D$ constructed from an example graph $G$.

  $(\Leftarrow):$ Suppose there is a $k'$-node branching $D$
  with $f(D) \geq s$. It follows that every node of $D$ achieves its
  maximum score. In particular, for every $1 \leq i \leq k$ the nodes
  $c_i$ must have score $\alpha$ in $D$ and hence there is a node $w_i
  \in V_i$ such that $c_i$ is adjacent to all nodes in $V_i^{w_i} \cup
  A_i$. Furthermore, for every $1 \leq i < j \leq k$ the node $a_{ij}$
  is adjacent to exactly one node in $V_i$ and to exactly one node in
  $V_j$. Let $v_i^l$ be the unique node in $V_i$ adjacent to $a_{ij}$
  and similarly let $v_i^m$ be the unique node in $V_j$ that is adjacent
  to $a_{ij}$ for every $1 \leq i < j \leq k$. Then $w_i=v_i$ and
  $w_j=v_j$ because otherwise the skeleton of $D$ would contain the
  cycle $(v_i,a_{ij},c_i)$ or the cycle
  $(v_j,a_{ij},c_j)$. Consequently, the edges represented by the parents
  of $a_{ij}$ in $D$ for all $1 \leq i < j \leq k$ form a $k$-clique in
  $G$.
\end{proof}

\section{Concluding remarks}

We have studied a natural approach to extend the known efficient
algorithms for branchings to polytrees that differ from branchings in only a few extra arcs. 
At first glance, one might expect this to be
achievable by simply guessing the extra arcs and solving the remaining
problem for branchings. However, we do not know whether such a reduction
is possible in the strict sense. Indeed, we had to take a slight detour
and modify the two matroids in a way that guarantees a control for the
interactions caused by the presence of high-in-degree nodes. As a
result, we got an algorithm that runs in time polynomial in the input
size: namely, there can be more than ${n-1 \choose k+1}$ relevant input
values for each of the $n$ nodes; so, the runtime of our algorithm is
less than cubic in the size of the input, supposing the local scores are
given explicitly. While this answers one question in the affirmative, it
also raises several further questions, some of which we give in the next
paragraphs.

Our complexity analysis relied on a result concerning the general weighted matroid
intersection problem. Do significantly faster algorithms exist when restricted to our two specific matroids? One might expect such algorithms exist, since the related problem for
branchings can be solved in $O(n^2)$ time by the algorithm of
\cite{Tarjan77}.

Even if we could solve the matroid intersection problem faster, our algorithm would remain practical only for very small values of $k$. 
Can one find an optimal $k$-branching significantly faster, especially if allowing every node to have at most two parents? As the current algorithm makes
around $n^{3k}$ mutually overlapping guesses, there might be a way to
considerably reduce the time complexity. Specifically, we ask whether
the restricted problem is fixed-parameter tractable with respect to
the parameter $k$, that is, solvable in $O(f(k)p(n))$ time for some
computable function $f$ and polynomial $p$~\cite{DowneyFellows99}.
The fixed-parameter algorithm given in Section~\ref{sec:fpt} can be seen as a
first step towards an answer to this question. Can we find
other restrictions under which the
$k$-branching problem becomes fixed-parameter tractable?

Can we use a similar approach for the more general $k$-node branching problem,
i.e., is there a polynomial time algorithm for the $k$-node branching
problem for every fixed $k$? Likewise, we do not know whether the problem is easier or harder for
polytrees than for general DAGs: Do similar techniques apply to
finding maximum-score DAGs that can be turned into branchings by
deleting some $k$ arcs?


\subsubsection*{Acknowledgments}
Serge Gaspers, Sebastian Ordyniak, and Stefan Szeider acknowledge support from the European Research Council (COMPLEX REASON, 239962). 
Serge Gaspers acknowledges support from the Australian Research Council (DE120101761).
Mikko Koivisto acknowledges the support from the Academy of Finland (Grant 125637).
Mathieu Liedloff acknowledges the support from the French Agence Nationale de la Recherche (ANR AGAPE ANR-09-BLAN-0159-03).


\end{document}